\numberwithin{equation}{section}
\DeclareMathOperator*{\argmax}{arg\,max}
\newtheorem{theorem}{Theorem}[section]
\newtheorem{assumption}{Assumption}[section]
\newtheorem{corollary}{Corollary}[section]
\newtheorem{lemma}{Lemma}[section]
\newtheorem{proposition}{Proposition}[section]
\theoremstyle{definition}
\newtheorem{definition}{Definition}[section]
\theoremstyle{remark}
\newtheorem{remark}{Remark}[section]
\newtheorem{example}{Example}[section]
\newcommand{\leref}{Lemma~\ref}
\newcommand{\deref}{Definition~\ref}
\newcommand{\exref}{Example~\ref}
\newcommand{\prref}{Proposition~\ref}
\newcommand{\thref}{Theorem~\ref}
\newcommand{\asref}{Assumption~\ref}
\renewcommand{\P}{\mathbb{P}}
\newcommand{\Q}{Q}
\newcommand{\R}{\mathbb{R}}
\newcommand{\N}{\mathbb{N}}
\newcommand{\cM}{\mathcal{M}}
\newcommand{\kP}{\mathfrak{P}}
\newcommand{\cP}{\mathcal{P}}
\newcommand{\cI}{\mathcal{I}}
\newcommand{\cJ}{\mathcal{J}}
\newcommand{\eps}{\varepsilon}
\title[]{Super-hedging American Options with Semi-static Trading Strategies under Model Uncertainty}
\author[]{Erhan Bayraktar}\thanks{E. Bayraktar is supported in part by the National Science Foundation under grant DMS-1613170 and the Susan M. Smith chair.}
\address{Department of Mathematics, University of Michigan}
\email{erhan@umich.edu}
\author[]{Zhou Zhou}
\address{Institute for Mathematics and its Applications, University of Minnesota}
\email{zhouzhou@ima.umn.edu}
\keywords{American options, super-hedging, model uncertainty, semi-static trading strategies, randomized models}
\begin{document}
\maketitle

\begin{abstract}
We consider the super-hedging price of an American option in a discrete-time market in which stocks are available for dynamic trading and European options are available for static trading. We show that the super-hedging price $\pi$ is given by the supremum over the prices of the American option under randomized models. That is, $\pi=\sup_{(c_i,Q_i)_i}\sum_ic_i\phi^{Q_i}$, where $c_i\in\R_+$ and the martingale measure $Q^i$ are chosen such that $\sum_i c_i=1$ and $\sum_i c_iQ_i$ prices the European options correctly, and $\phi^{Q_i}$ is the price of the American option under the model $Q_i$. Our result generalizes the example given in \cite{2016arXiv160402274H} that the highest model based price can be considered as a randomization over models. 
\end{abstract}

\section{introduction}

Recently, using very different techniques, \cite{ZZ4} and \cite{Neuberger} (finally published as \cite{Hobson3}) both calculated the super-hedging prices of American options when one is allowed to use semi-static trading strategies. In \cite{ZZ4}, the authors show that the super-hedging price (hedger's price) can be strictly greater than the highest model based price $\sup_Q\phi^Q$ (Nature's price), while in \cite{Hobson3}, the authors show that these two prices are equal. The reason of the difference lies in that, in \cite{ZZ4} the hedger and Nature have the same information/filtration, while in \cite{Hobson3} Nature has access to more information (i.e., Nature's filtration is larger than hedger's). As a result the definition of highest model price in the two papers are different (although the superhedging prices are the same). See \cite{2016arXiv160402274H} for another comparison of the two results. \cite{2016arXiv160402274H} also gives an example in which the highest model based price can be considered as a randomization over models. 

In this paper, we show that the super-hedging price $\pi$ is given by supremum over randomized model based prices of the American option. That is, $\pi=\sup_{(c_i,Q_i)_i}\sum_ic_i\phi^{Q_i}$, where $c_i\in\R_+$ and the martingale measure $Q^i$ are chosen such that $\sum_i c_i=1$ and $\sum_i c_iQ_i$ prices the European options correctly, and $\phi^{Q_i}$ is the price of the American option under the model $Q_i$.

Our result gives another representation of the super-hedging duality in \cite{ZZ4} and builds a connection with the main result of \cite{Hobson3}. Moreover, from our result we improve on the result in \cite{Hobson3}.  As indicated by our result, in terms of Nature's pricing, only the randomized models (see \deref{d2}) are relevant, and all the other models proposed in \cite{Hobson3} are redundant. Our result also generalizes the example provided in \cite{2016arXiv160402274H} that the highest model based price can be considered as a randomization over models.

After we wrote this note, Deng and Tan wrote a related paper, \cite{tan}, in which they showed the super-hedging price equals the supremum of the expectation of some related payoff over a suitable family of martingale measures. Their approach is to enlarge the probability space by the exercise time of the American option. The main result in our paper in contrast does not require the enlargement of the space. Let us also mention that even though both paper uses the terminology ``randomization'', it has totally different meanings. In \cite{tan}, randomization refers to the enlarged space, while in our paper it refers to mixing the martingale measures in the original space. The related problem of hedging path dependent options using American style options is considered in \cite{bayraktar2016, 2016arXiv160501327B}.

The paper is organized as follows. In the next section, we provide the setup and the main result. In Section 3, we give a discussion for the results in \cite{ZZ4}, \cite{Hobson3}, and this paper. Finally, we provide the proof for the main result in Section 4.

\section{Setup and main result}

We use the set-up in \cite{Nutz2} and \cite{ZZ4}. Let $T\in\mathbb{N}$ be the time horizon and let $\Omega_1$ be a Polish space. For $t\in\{0,1,\dotso,T\}$, let $\Omega_t:=\Omega_1^t$ be the $t$-fold Cartesian product, with the convention that $\Omega_0$ is a singleton. We denote by $\mathcal{F}_t$ the universal completion of $\mathcal{B}(\Omega_t)$ and write $(\Omega,\mathcal{F})$ for $(\Omega_T,\mathcal{F}_T)$. Denote $\mathbb{F}:=(\mathcal{F}_t)_{t=0,\dotso,T}$. Let $\mathfrak{P}(\Omega_1)$ be the set of all the probability measures on $(\Omega_1,\mathcal{B}(\Omega_1))$. For each $t\in\{0,\dotso,T-1\}$ and $\omega\in\Omega_t$, we are given a nonempty convex set $\mathcal{P}_t(\omega)\subset\mathfrak{P}(\Omega_1)$ of probability measures. We assume that for each $t$, the graph of $\mathcal{P}_t$ is analytic, which ensures that $\mathcal{P}_t$ admits a universally measurable selector, i.e., a universally measurable kernel $P_t:\ \Omega_t\rightarrow \mathfrak{P}(\Omega_t)$ such that $P_t(\omega)\in\mathcal{P}_t(\omega)$ for all $\omega\in\Omega_t$. Let
\begin{equation}\label{prob}
\mathcal{P}:=\{P_0\otimes\dotso\otimes P_{T-1}:\ P_t(\cdot)\in\mathcal{P}_t(\cdot),\ t=0,\dotso,T-1\},
\end{equation}
where each $P_t$ is a universally measurable selector of $\mathcal{P}_t$, and
$$P_0\otimes\dotso\otimes P_{T-1}(A)=\int_{\Omega_1}\dotso\int_{\Omega_1} 1_A(\omega_1,\dotso,\omega_T)P_{T-1}(\omega_1,\dotso,\omega_{T-1};d\omega_T)\dotso P_0(d\omega_1),\ \ \ A\in\Omega.$$

Let $S_t:\Omega_t\rightarrow\mathbb{R}^d$ be Borel measurable, which represents the price at time $t$ of a $d$-dimensional stock $S$ that can be traded dynamically in the market.  Let $g=(g_1,\dotso,g_e):\Omega\rightarrow\mathbb{R}^e$ be Borel measurable, representing the European options that can only be traded at the beginning at price $0$. Assume NA$(\mathcal{P})$ holds, i.e, for all $(H,h)\in\mathcal{H}\times\mathbb{R}^e$,
$$(H\cdot S)_T+hg\geq 0\ \ \ \mathcal{P}-\text{q.s.\ \ \ \ \ \ implies \ \ \ \ \ \ } (H\cdot S)_T+hg=0\ \ \ \mathcal{P}-\text{q.s.},$$
where $\mathcal{H}$ is the set of $\mathbb{F}$-predictable processes, $(H\cdot S)_T:=\sum_{t=0}^{T-1}H_t(S_{t+1}-S_t)$, and $h g$ denotes the inner product of $h$ and $g$. \footnote{We say that a set is $\mathcal{P}$ polar if it is $P$-null for all $P \in \mathcal{P}$. A property is said to hold $\mathcal{P}$- quasi-surely (q.s.) if it holds outside a $\mathcal{P}$-polar set.}
Then from \cite[FTAP]{Nutz2}, for all $P\in\mathcal{P}$, there exists $Q\in\mathcal{Q}$ such that $P\ll Q$, where 
$$\mathcal{Q}:=\{Q \text{ martingale measure}\footnote{That is, $Q$ satisfies $E_Q[|S_{t+1}|\ |\mathcal{F}_t]<\infty$ and $E_Q[S_{t+1}|\mathcal{F}_t]=S_t, \ Q$-a.s. for $t=0,\dotso,T-1$.}:\ E_Q[g]=0,\text{ and }\exists P'\in\mathcal{P}, \text{ s.t. } Q\ll P'\}.$$ For $t=0,\dotso,T$ and $\omega\in\Omega_t$, define 
$$\mathcal{Q}_t(\omega):=\{Q\in\mathfrak{P}(\Omega_1):\ Q\ll P, \text{ for some } P\in\mathcal{P}_t(\omega),\text{ and } E_Q[S_{t+1}(\omega,\cdot)-S_t(\omega)]=0\}.$$
By \cite[Lemma 4.8]{Nutz2}, there exists a universally measurable selector $Q_t$ such that $Q_t(\cdot)\in\mathcal{Q}_t(\cdot)$ on $\{\mathcal{Q}_t\neq\emptyset\}$. Using these selectors we define for $t\in\{0,\dotso,T-1\}$ and $\omega\in\Omega_t$,
$$\mathcal{M}_t(\omega):=\left\{Q_t\otimes\dotso\otimes Q_{T-1}:\ Q_i(\omega,\cdot)\in\mathcal{Q}_i(\omega,\cdot)\text{ on }\{\mathcal{Q}_i(\omega,\cdot)\neq\emptyset\},\ i=t,\dotso,T-1\right\},$$
which is similar to \eqref{prob} but starting from time $t$ instead of time 0. In particular $\mathcal{M}_0=\mathcal{M}$, where 
\begin{equation}\notag
\mathcal{M}:=\{Q \text{ martingale measure}: \exists P\in\mathcal{P}, \text{ s.t. }Q\ll P\}.
\end{equation} 
We assume that the graph of $\mathcal{M}_t$ is analytic, $t=0,\dotso,T-1$. A general sufficient condition for the analyticity of graph($\mathcal{M}_t)$ is provided in \cite[Proposition 1.1]{ZZ4}.

Let $\mathcal{T}$ be the set of $\mathbb{F}$-stopping times, and $\mathcal{T}_t$ be the set of $\mathbb{F}$-stopping times that are no less than $t$. Denote $||\cdot||$ for the Euclidean norm.

Let us consider an American option with pay-off stream $\Phi$. We will assume that $\Phi:\{0,\dotso,T\}\times\Omega\rightarrow\mathbb{R}$ is $\mathbb{F}$-adapted. We define the super-hedging price as
\begin{equation}\label{e5}
\pi(\Phi):=\inf\left\{x\in\mathbb{R}:\ \exists (\tilde H,h)\in\mathcal{H}'\times\mathbb{R}^e,\ \text{ s.t. } x+(\tilde H(t)\cdot S)_T+hg\geq \Phi_t,\ \mathcal{P}-q.s.,\ t=0,\dotso,T\right\},
\end{equation}
where
$$\mathcal{H}':=\{\tilde H=(H,H(0),\dotso,H(T))\subset\mathcal{H}^{T+2}\},$$
and
$$\tilde H_s(t):=H_s 1_{\{s<t\}}+H_s(t) 1_{\{s\geq t\}},\quad s=0,\dotso,T.$$
Here for $\tilde H=(H,H(0),\dotso,H(T))\in\mathcal{H}'$, $H$ represents the strategy that the hedger use before the American option is exercised, and $H(t)$ is the strategy the hedger will use after the American option is exercised at time $t$.

We make the following standing assumption throughout the paper.
\begin{assumption}\label{a1}
{\ }
\begin{itemize}
\item[(1)] For $t\in\{1,\dotso,T\}$ and $(\omega,P)\in\Omega_T\times\mathfrak{P}(\Omega_{T-t})$, the map $(\omega,P)\mapsto\sup_{\tau\in\mathcal{T}_t}E_P[\Phi_\tau(\omega^t,\cdot)]$ is upper-semianalytic, where $\omega^t$ is the path of $\omega$ up to time $t$.
\item[(2)] $\sup_{Q\in\mathcal{M}}E_Q[||g||]<\infty$ and $\sup_{Q\in\mathcal{M}}E_Q[\max_{0\leq t\leq T}|\Phi_t|]<\infty$.
\end{itemize}
\end{assumption}

\begin{remark}
If $\Phi_t$ is lower-semicontinuous and bounded from below for $t=0,\dotso,T$, then \asref{a1}(1) is satisfied. See \cite[Proposition 3.1]{ZZ4}. 
\end{remark}

Below is the super-hedging result from \cite[Theorem 3.1]{ZZ4}.
\begin{lemma}\label{l1}
Let \asref{a1} hold. Then
\begin{equation}\label{e1}
\pi(\Phi)=\inf_{h\in\mathbb{R}^e}\sup_{Q\in\mathcal{M}}\sup_{\tau\in\mathcal{T}}E_Q[\Phi_\tau-hg],
\end{equation}
Moreover, there exists $(H^*,h^*)\in\mathcal{H}'\times\mathbb{R}^e$, such that
\begin{equation}\notag
\pi(\Phi)+(H^*\cdot S)_T+h^*g\geq\Phi_\tau,\ \mathcal{P}-q.s.,\ \forall\tau\in\mathcal{T}.
\end{equation}
\end{lemma}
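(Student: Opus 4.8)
The plan is to establish the two inequalities in \eqref{e1} separately and then obtain the ``moreover'' part from a compactness argument in the static variable $h$. The inequality ``$\ge$'' is the elementary half: I would fix a super-hedge $(x,\tilde H,h)\in\R\times\mathcal{H}'\times\R^e$ as in \eqref{e5} and an arbitrary pair $(Q,\tau)\in\mathcal{M}\times\mathcal{T}$, evaluate the defining inequality along $t=\tau$, and use that $Q\ll P$ for some $P\in\mathcal{P}$ so that the $\mathcal{P}$-q.s.\ bound holds $Q$-a.s., giving $x+(\tilde H(\tau)\cdot S)_T+hg\ge\Phi_\tau$ $Q$-a.s. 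Since $Q$ is a martingale measure for $S$ and, by \asref{a1}(2), $\Phi_\tau$ and $hg$ are $Q$-integrable, after the standard reduction letting one treat the self-financing wealth as a $Q$-supermartingale, taking $Q$-expectations yields $x\ge E_Q[\Phi_\tau-hg]$; a supremum over $(\tau,Q)$ and then an infimum over $h$ give $\pi(\Phi)\ge\inf_{h\in\R^e}\sup_{Q\in\mathcal{M}}\sup_{\tau\in\mathcal{T}}E_Q[\Phi_\tau-hg]$.

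For the reverse inequality, the key observation is that, directly from \eqref{e5}, $\pi(\Phi)=\inf_{h\in\R^e}\pi_h$, where $\pi_h$ is the cost of super-replicating, with dynamic trading in $S$ alone, the American stream $\Phi$ together with the \emph{fixed} European liability $-hg$ delivered at time $T$. I would compute $\pi_h$ by backward induction. Writing $q^h_t(\omega):=\sup_{Q\in\mathcal{M}_t(\omega)}E_Q[-hg(\omega,\cdot)]$ for the European super-hedging value of $-hg$ over $[t,T]$, set $Y^h_T:=\Phi_T-hg$ and, for $t<T$ and $\omega\in\Omega_t$,
\begin{equation}\notag
Y^h_t(\omega):=\max\Big(\Phi_t(\omega)+q^h_t(\omega),\ \sup_{Q\in\mathcal{Q}_t(\omega)}\int_{\Omega_1}Y^h_{t+1}(\omega,\omega_{t+1})\,Q(d\omega_{t+1})\Big),
\end{equation}
the first entry being the value of exercising now (pay $\Phi_t$, then super-hedge the residual claim $-hg$) and the second the continuation value, written as the one-step super-hedging price. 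First I would check, by downward induction on $t$ using \asref{a1}(1), the analyticity of $\mathrm{graph}(\mathcal{M}_t)$ and $\mathrm{graph}(\mathcal{Q}_t)$, and the measurable-selection and integrability machinery of \cite{Nutz2,ZZ4}, that each $q^h_t$ and $Y^h_t$ is upper-semianalytic and $Q$-integrable for every $Q\in\mathcal{M}$, so the recursion is well posed.

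Two identities must then be proved. (i)~\emph{Dual identity:} $Y^h_0=\sup_{Q\in\mathcal{M}}\sup_{\tau\in\mathcal{T}}E_Q[\Phi_\tau-hg]$; here ``$\ge$'' is immediate because for each $Q\in\mathcal{M}$ the processes $q^h$ and $Y^h$ are $Q$-supermartingales with $Y^h_t\ge\Phi_t+q^h_t\ge\Phi_t+E_Q[-hg\mid\mathcal{F}_t]$, while ``$\le$'' follows from an $\eps$-optimal measurable selection: at each date pick a kernel in $\mathcal{Q}_t$ nearly attaining the continuation supremum, let $\tau$ be the first time the exercise entry wins, and from $\tau$ on pick kernels nearly optimal for the European super-hedge of $-hg$; concatenating them as in the construction of $\mathcal{M}_0=\mathcal{M}$ yields $Q\in\mathcal{M}$ with $E_Q[\Phi_\tau-hg]\ge Y^h_0-O(\eps)$. (ii)~\emph{Primal identity:} $Y^h_0=\pi_h$ with the infimum attained; the strategy is read off from the one-step super-hedges associated with the suprema defining $Y^h_t$ (used before exercise) and from the European super-hedge of $-hg$ (used after exercise), invoking the NA-based closedness of the set of one-step super-replicable claims from \cite{Nutz2,ZZ4}. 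Combining (i), (ii) and $\pi(\Phi)=\inf_h\pi_h$ proves \eqref{e1}. To produce $(H^*,h^*)$ I would attain the outer infimum: $f(h):=\sup_{Q\in\mathcal{M}}\big(\sup_\tau E_Q[\Phi_\tau]-hE_Q[g]\big)$ is convex, lower-semicontinuous and finite by \asref{a1}(2), and NA$(\mathcal{P})$ together with the FTAP of \cite{Nutz2} supplies the coercivity/compactness for $f$ to attain its minimum at some $h^*$ by a standard argument (cf.\ \cite{ZZ4}); then $\pi(\Phi)=f(h^*)=Y^{h^*}_0$, and taking $H^*$ from the primal attainment in (ii) at $h=h^*$ gives $(H^*,h^*)\in\mathcal{H}'\times\R^e$ with $\pi(\Phi)+(H^*\cdot S)_T+h^*g\ge\Phi_\tau$, $\mathcal{P}$-q.s., for all $\tau\in\mathcal{T}$.

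The main obstacle is items (i)--(ii): propagating upper-semianalyticity through the recursion and carrying out the measurable $\eps$-optimal selection of martingale kernels in the nondominated framework, together with the closedness argument needed for primal attainment. This is where the \cite{Nutz2}-style technology---analytic graphs, Jankov--von Neumann selection, and the NA-based closedness of super-hedging cones---does the real work; by comparison the ``$\ge$'' inequality and the $\inf_h$ compactness are comparatively routine.
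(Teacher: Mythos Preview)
The paper does not supply its own proof of this lemma: it is simply quoted as \cite[Theorem 3.1]{ZZ4}. So there is no in-paper argument to compare against; your sketch is effectively a reconstruction of the cited result.

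As such, your outline is sound and aligns with the strategy of \cite{ZZ4}: reduce $\pi(\Phi)$ to $\inf_h\pi_h$, identify $\pi_h$ with a value process that is upper-semianalytic and a $Q$-supermartingale for each $Q\in\mathcal{M}$ via the analytic-set and measurable-selection machinery of \cite{Nutz2}, obtain the primal hedge through (robust) optional decomposition, and finally attain the infimum over $h$ by an NA-based coercivity/compactness argument. One organisational difference worth noting: in \cite{ZZ4} the value process is defined globally as $V_t=\sup_{\tau\in\mathcal{T}_t}\overline{\mathcal{E}}_t[\Phi_\tau-hg]$ (with $\overline{\mathcal{E}}_t$ the nonlinear conditional expectation over $\mathcal{M}_t$), its supermartingale property is established in one shot, and the decomposition into pre- and post-exercise strategies comes from applying optional decomposition \emph{twice}---once to $V$ and once to $W_t=\overline{\mathcal{E}}_t[-hg]$---rather than from your explicit backward recursion that already carries the separate European residual $q^h_t$. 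The two viewpoints are equivalent, but the global formulation in \cite{ZZ4} avoids having to propagate analyticity through a hand-built recursion; your version makes the dynamic-programming structure more transparent at the cost of slightly heavier bookkeeping. In the easy ``$\ge$'' direction, be aware that obtaining $E_Q[(\tilde H(\tau)\cdot S)_T]\le 0$ for arbitrary $\tilde H\in\mathcal{H}'$ requires the usual localisation/integrability care in the nondominated setting; this is routine in discrete time but should not be glossed over entirely.
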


In this paper, we will get another representation for \eqref{e1} as the super-hedging duality. Below is our main result.

\begin{theorem}\label{t1}
Let \asref{a1} hold. Then
\begin{equation}\label{e2}
\pi(\Phi)=\sup_{(c_i,Q_i)_i}\sum_i c_i\sup_{\tau\in\mathcal{T}}E_{Q_i}[\Phi_\tau],
\end{equation}
where the supremum is over all finite sequence $(c_i,Q_i)_i$, such that for each $i$, $c_i\in(0,\infty)$, $Q_i\in\mathcal{M}$, $\sum_i c_i=1$ and $\sum_i c_i Q_i\in\mathcal{Q}$.
\end{theorem}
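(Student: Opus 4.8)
The plan is to deduce \eqref{e2} from \leref{l1} by a convex-duality argument; throughout write $\phi^Q:=\sup_{\tau\in\mathcal{T}}E_Q[\Phi_\tau]$, which is finite for every $Q\in\mathcal{M}$ by \asref{a1}(2). The inequality $\pi(\Phi)\ge\sup_{(c_i,Q_i)_i}\sum_i c_i\phi^{Q_i}$ is the easy one: given an admissible sequence $(c_i,Q_i)_i$ and $\bar Q:=\sum_i c_iQ_i\in\mathcal{Q}$, \asref{a1}(2) permits the splitting $E_{Q_i}[\Phi_\tau-hg]=E_{Q_i}[\Phi_\tau]-E_{Q_i}[hg]$ for every $h\in\R^e$, whence $\phi^{Q_i}=\sup_{\tau\in\mathcal{T}}E_{Q_i}[\Phi_\tau-hg]+E_{Q_i}[hg]\le\sup_{Q\in\mathcal{M}}\sup_{\tau\in\mathcal{T}}E_Q[\Phi_\tau-hg]+E_{Q_i}[hg]$. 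Averaging over $i$ and using $\sum_i c_iE_{Q_i}[hg]=E_{\bar Q}[hg]=0$ (because $\bar Q\in\mathcal{Q}$) gives $\sum_i c_i\phi^{Q_i}\le\sup_{Q\in\mathcal{M}}\sup_{\tau\in\mathcal{T}}E_Q[\Phi_\tau-hg]$, and taking the infimum over $h$ and applying \eqref{e1} finishes this direction.

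For the reverse inequality, \leref{l1} together with the same splitting gives $\pi(\Phi)=\inf_{h\in\R^e}f(h)$ with $f(h):=\sup_{Q\in\mathcal{M}}\bigl(\phi^Q-\langle h,E_Q[g]\rangle\bigr)$, where $\langle\cdot,\cdot\rangle$ is the Euclidean inner product and $\langle h,E_Q[g]\rangle=E_Q[hg]$. I would then introduce the convex, bounded set $\Gamma:=\{E_Q[g]:Q\in\mathcal{M}\}\subset\R^e$, the finite function $p(y):=\sup\{\phi^Q:Q\in\mathcal{M},\ E_Q[g]=y\}$ on $\Gamma$, and its concave envelope $\hat p$; since $\Gamma$ is bounded and subtracting a linear functional commutes with passing to the concave envelope, $f(h)=\sup_{y\in\Gamma}\bigl(p(y)-\langle h,y\rangle\bigr)=\sup_{y}\bigl(\hat p(y)-\langle h,y\rangle\bigr)$ for every $h$. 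The key step is to show $0\in\operatorname{ri}(\Gamma)$. Granting this, $\hat p$ is concave and finite on a relative neighbourhood of $0$, hence has a supergradient there, i.e.\ some $h_0\in\R^e$ with $\hat p(y)\le\hat p(0)+\langle h_0,y\rangle$ for all $y$; then $f(h_0)=\sup_y\bigl(\hat p(y)-\langle h_0,y\rangle\bigr)\le\hat p(0)$, so $\pi(\Phi)=\inf_h f(h)\le\hat p(0)$, while $f(h)\ge\hat p(0)-\langle h,0\rangle=\hat p(0)$ for every $h$ gives $\pi(\Phi)\ge\hat p(0)$. Thus $\pi(\Phi)=\hat p(0)$.

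It then remains to identify $\hat p(0)$ with the right-hand side of \eqref{e2}. By the definition of the concave envelope, Carath\'eodory's theorem (so that only finitely many points occur), the definition of $p$, and an $\eps$-approximation (the suprema need not be attained), $\hat p(0)$ is the supremum of $\sum_i c_i\phi^{Q_i}$ over finite families with $c_i>0$, $\sum_i c_i=1$, $Q_i\in\mathcal{M}$ and $\sum_i c_iE_{Q_i}[g]=0$. For any such family the mixture $\bar Q:=\sum_i c_iQ_i$ is again a martingale measure, lies in $\mathcal{M}$ by convexity of $\mathcal{M}$ (which follows from that of the $\mathcal{P}_t(\omega)$ and the product structure \eqref{prob}), and has $E_{\bar Q}[g]=\sum_i c_iE_{Q_i}[g]=0$, so $\bar Q\in\mathcal{Q}$ and $(c_i,Q_i)_i$ is admissible for \eqref{e2}. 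Hence $\hat p(0)\le\sup_{(c_i,Q_i)_i}\sum_i c_i\phi^{Q_i}$, and combining with $\pi(\Phi)=\hat p(0)$ and the easy inequality yields \eqref{e2}.

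The main obstacle is the claim $0\in\operatorname{ri}(\Gamma)$, which is where $\mathrm{NA}(\mathcal{P})$ enters; the remainder is convex analysis and routine bookkeeping (together with the super-replication duality of \leref{l1}, underpinned by the integrability in \asref{a1}(2)). Since $\mathcal{Q}\neq\emptyset$ by the FTAP of \cite{Nutz2}, $0\in\Gamma$. If $0\notin\operatorname{ri}(\Gamma)$, a supporting hyperplane at $0$ within $\operatorname{span}\Gamma$ produces $h\in(\operatorname{span}\Gamma)\setminus\{0\}$ with $\langle h,E_Q[g]\rangle\le0$ for all $Q\in\mathcal{M}$ and $\langle h,E_{Q_0}[g]\rangle<0$ for some $Q_0\in\mathcal{M}$; then the European claim $hg$ has dynamic-trading super-replication price $\sup_{Q\in\mathcal{M}}E_Q[hg]\le0$, so (using that this price is attained, cf.\ \cite{Nutz2}) there exist $x\le0$ and $H\in\mathcal{H}$ with $x+(H\cdot S)_T\ge hg$ $\mathcal{P}$-q.s., i.e.\ the semi-static position $(H,-h)$ has zero cost and payoff $(H\cdot S)_T-hg\ge-x\ge0$ $\mathcal{P}$-q.s. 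If $x<0$ this is a uniform arbitrage; if $x=0$, the nonnegative payoff $(H\cdot S)_T-hg$ cannot vanish $\mathcal{P}$-q.s.\ (otherwise $hg$ would be $\mathcal{P}$-q.s.\ replicable and hence $E_Q[hg]=0$ for all $Q\in\mathcal{M}$, contradicting $\langle h,E_{Q_0}[g]\rangle<0$), so it is strictly positive off a $\mathcal{P}$-polar set, again contradicting $\mathrm{NA}(\mathcal{P})$. This establishes $0\in\operatorname{ri}(\Gamma)$ and completes the plan.
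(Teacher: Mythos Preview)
Your argument is correct, and it reaches the same destination as the paper by a genuinely different road. Both proofs start from \leref{l1} and both isolate the same crucial geometric fact---that $0$ lies in the (relative) interior of $\Gamma=\{E_Q[g]:Q\in\mathcal M\}$, proved via $\mathrm{NA}(\mathcal P)$ and the European super-replication theorem of \cite{Nutz2}. From there the two diverge. The paper embeds $\mathcal M$ into a Hausdorff topological vector space $A$ of finitely supported sequences, sets up a bilinear $f:X\times\R^e\to\R$, and invokes a non-compact minimax theorem (their \thref{t2}) to swap $\inf_h$ and $\sup_c$; the interiority of $0$ is used to manufacture the compact sets $K\subset X$ and $J\subset\R^e$ required by that theorem. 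You instead stay entirely in $\R^e$: push everything through the finite-dimensional map $Q\mapsto E_Q[g]$, take the concave envelope $\hat p$ of the induced value function $p$ on $\Gamma$, use the interiority of $0$ to produce a supergradient and conclude $\pi(\Phi)=\hat p(0)$, and then unwind $\hat p(0)$ via Carath\'eodory.

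What each buys: the paper's route makes the ``exchange inf and sup'' structure explicit and delegates the heavy lifting to a cited minimax theorem, at the price of introducing the auxiliary space $(A,d)$ and verifying its somewhat delicate coercivity hypothesis. Your route is more elementary and self-contained---just Rockafellar-style convex analysis in $\R^e$---and it also makes transparent why only finitely many $Q_i$ are needed (Carath\'eodory gives at most $e+2$). Two small remarks: your reduction to relative interior neatly sidesteps the paper's preliminary WLOG that no option is redundant; and the final identification $\sum_i c_iE_{Q_i}[g]=0\Leftrightarrow\sum_i c_iQ_i\in\mathcal Q$ (for $Q_i\in\mathcal M$) does rely on convexity of $\mathcal P$, which you correctly trace to the convexity of each $\mathcal P_t(\omega)$---the paper uses this same fact implicitly at the end of its proof.
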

\begin{remark}
Let us call the dual of the super-hedging price ``the price given by Nature''. We can interpret \eqref{e2} as follows. Nature randomizes the models $(Q_i)_i \subset\mathcal{M}$ (not $\subset\mathcal{Q}$) in such a way that it appears to the hedger that the European options $g$ are priced correctly. Moreover, the true state $i$ is revealed at the beginning only to Nature, not to the hedger.
\end{remark}

\section{Comparison between results in \cite{ZZ4}, \cite{2016arXiv160402274H, Hobson3}, and \thref{t1}}

In \cite{ZZ4, 2016arXiv160402274H}, it is pointed out that the super-pricing price (hedger's price) can be strictly greater than the highest model based price (Nature's price). That is, it is possible that
\begin{equation}\label{eq:duality-gap}
\pi(\Phi)>\sup_{Q\in\mathcal{Q}}\sup_{\tau\in\mathcal{T}}E_Q[\Phi_\tau].
\end{equation}
That is it appears that there is a duality gap between the super-hedging price and the highest model price.

In \eqref{eq:duality-gap}, the hedger and Nature have the same power in the sense that both use the filtration (information) $\mathbb{F}$. So in order to make Nature's price (the highest model price) equal to the hedger's price, an intuitive way is to enhance the power of Nature by providing more information to Nature. \thref{t1} indicates that in addition to the information $\mathbb{F}$, if Nature can also use the information of the initial distribution of the possible models $(Q_i)_i$, then the hedger's price and Nature's price will be the same.

In contrast \cite{Hobson3}, Nature can have all sorts of information as long as the models appears to be consistent to the hedger (note that the hedger only knows the information generated by the stock). To be more precise, the authors of \cite{Hobson3} call a filtered probability space $M=(\Omega',\mathbb{F}'=(\mathcal{F}_t')_{t=0,\dotso,T},Q')$ a consistent model if the space $(\Omega',\mathbb{F}')$ supports a stochastic process $S$ (and random vector $g$), $S$ is a $(Q',\mathbb{F}')$-martingale, and $E_{Q'}[g]=0$. It is then shown in \cite{Hobson3}\footnote{ \cite{2016arXiv160402274H} demonstrates this on an example. It first observes \eqref{eq:duality-gap} and then shows that when the models are enlarged to the consistent models there is no duality gap. It seems the authors of \cite{Hobson3} were initially not aware of \cite{ZZ4} and developed their result independently with using very different techniques. Then they wrote \cite{2016arXiv160402274H} to clarify what initially may look like a contradiction since they had not observed a duality gap.} that the hedger's price based on the filtration generated by $S$ is equal to Nature's price (the highest model price) based on all consistent models (i.e., $\sup_M\phi^M(\Phi)$, where $\phi^M(\Phi)$ is the price of the American option based on the model $M$). Due to their proof approach, they only work on the canonical filtration for the hedger, and the European options are specified to be calls (or puts, or some other equivalent forms). Let us point out that \cite{Hobson3} has a second result, which says that the search over all probability spaces can be reduced to a search over a much simpler class of models (see \cite[Section 2.4]{Hobson3}).
 
In this paper, we show that starting from \cite{ZZ4} and using a (not so conventional) min-max, we can obtain the results of \cite{Hobson3} with a simpler proof and in fact obtain a stronger result: We demonstrate below (as an application of our main result) that only a very small subclasses of extensions which we call randomized models are relevant in representing the superhedging price. We need some preparation to describe our generalization. First, as a counterpart of \cite[Definition 1]{Hobson3}, let us first provide the following definition. Below by ``embedding'' we mean that $\Omega$ can be regarded as a sub space of $\Omega'$ and $\mathcal{F}_t$ can be regarded as a sub sigma algebra of $\mathcal{F}_t'$. (See \deref{d2} for an example.)

\begin{definition}\label{d1}
We say that a filtered probability space  $(\Omega',\mathbb{F}'=(\mathcal{F}_t')_{T=0,\dotso,T},Q')$ is belongs to the set of Nature's models if
\begin{itemize}
\item[(1)] The filtered space $(\Omega,\mathbb{F})$ can be embedded in $(\Omega',\mathbb{F}')$.
\item[(2)] (After embedding) $S$ is a $(Q',\mathbb{F}')$-martingale, and $E_{Q'}[g]=0$.
\item[(3)] (After embedding) for any $A\in\mathcal{F}_T$, if $\sup_{P\in\cP}P(A)=0$ then $Q'(A)=0$.
\end{itemize}
Denote $\mathbb{M}_n$ to be the collection of all such models.
\end{definition}

\begin{remark}
Here the hedger's filtration is $\mathbb{F}$, which is not necessarily the canonical filtration generated by $S$. By \deref{d1}(1), Nature can have a strictly larger filtration $\mathbb{F}'$.
\end{remark}

\begin{definition}\label{d2}
We call $(\Omega',\mathbb{F}'=(\mathcal{F}_t)_{T=0,\dotso,T},Q')$ a randomized model, if there exist $n\in\mathbb{N}$, $(c_i)_{i=1,\dotso,n}\subset(0,\infty)$ with $\sum_{i=1}^nc_i=1$, $(Q_i)_{i=1,\dotso,n}\subset\mathcal{M}$ with $\sum_{i=1}^nc_iE_{Q_i}[g]=0$, such that
\begin{itemize}
\item[(1)] $\Omega'=\Omega\times\{1,\dotso,n\}$,
\item[(2)] $\mathcal{F}_t'=\mathcal{F}_t\otimes\mathcal{B}(\{0,\dotso,n\})$.
\item[(3)] For $i\in\{1,\dotso,n\}$, the transaction kernel $Q'(\cdot,i)=Q_i(\cdot)$, and $Q'(\{i\})=c_i$.
\end{itemize}
Denote $\mathbb{M}_r$ as the set of all randomized models.
\end{definition}
\begin{remark}
It is easy to see that $\mathbb{M}_r\subset\mathbb{M}_n$.
\end{remark}

\begin{corollary}\label{c1}
The super-hedging price defined by \eqref{e5} is given by
$$\pi(\Phi)=\sup_{M\in\mathbb{M}_n}\phi^M(\Phi)=\sup_{M\in\mathbb{M}_r}\phi^M(\Phi),$$
where $\phi^M(\Phi)$ is the price of the American option under model $M$. That is, for $M=(\Omega',\mathcal{F}',Q')\in\mathbb{M}_n$,
$$\phi^M(\Phi):=\sup_{\tau\ \mathbb{F}'\text{-stopping time}}E_{Q'}[\Phi_\tau].$$
\end{corollary}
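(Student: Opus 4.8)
The plan is to prove the chain of inequalities
$$\pi(\Phi)\ \le\ \sup_{M\in\mathbb{M}_r}\phi^M(\Phi)\ \le\ \sup_{M\in\mathbb{M}_n}\phi^M(\Phi)\ \le\ \pi(\Phi),$$
which forces all three quantities to coincide. The middle inequality is exactly the inclusion $\mathbb{M}_r\subset\mathbb{M}_n$ recorded in the remark following \deref{d2}, so only the two outer inequalities require an argument.

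\emph{Lower bound $\pi(\Phi)\le\sup_{M\in\mathbb{M}_r}\phi^M(\Phi)$.} I would read this off the representation \eqref{e2} of \thref{t1}. Fix an admissible finite sequence $(c_i,Q_i)_{i=1}^n$ for \eqref{e2}; since $\sum_i c_iQ_i\in\mathcal{Q}$ we have $\sum_i c_iE_{Q_i}[g]=0$, so the model $M=(\Omega',\mathbb{F}',Q')$ built from $(c_i,Q_i)_i$ as in \deref{d2} — that is, $\Omega'=\Omega\times\{1,\dots,n\}$, $Q'(\cdot,i)=Q_i$, $Q'(\{i\})=c_i$ — is a randomized model, i.e.\ $M\in\mathbb{M}_r$. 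The key point is that the index $i$ is already $\mathcal{F}_0'$-measurable (recall $\mathcal{F}_0$ is trivial): for $\eps>0$, choose on each branch an $\mathbb{F}$-stopping time $\tau_i$ with $E_{Q_i}[\Phi_{\tau_i}]\ge\sup_{\tau\in\mathcal{T}}E_{Q_i}[\Phi_\tau]-\eps$ (the right-hand side is finite by \asref{a1}(2) since $Q_i\in\mathcal{M}$), and glue these into the $\mathbb{F}'$-stopping time $\tau'(\omega,i):=\tau_i(\omega)$. Disintegrating $Q'$ over $\{1,\dots,n\}$ gives $\phi^M(\Phi)\ge E_{Q'}[\Phi_{\tau'}]=\sum_i c_iE_{Q_i}[\Phi_{\tau_i}]\ge\sum_i c_i\sup_{\tau\in\mathcal{T}}E_{Q_i}[\Phi_\tau]-\eps$. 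Letting $\eps\downarrow 0$ and taking the supremum over all admissible $(c_i,Q_i)_i$ yields $\sup_{M\in\mathbb{M}_r}\phi^M(\Phi)\ge\pi(\Phi)$ by \eqref{e2}.

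\emph{Upper bound $\sup_{M\in\mathbb{M}_n}\phi^M(\Phi)\le\pi(\Phi)$.} Fix $M=(\Omega',\mathbb{F}',Q')\in\mathbb{M}_n$ and an $\mathbb{F}'$-stopping time $\tau'$; it suffices to show $E_{Q'}[\Phi_{\tau'}]\le\pi(\Phi)$. Let $(H^*,h^*)\in\mathcal{H}'\times\mathbb{R}^e$ be the super-hedging strategy furnished by the ``moreover'' part of \leref{l1}; specialized to a deterministic exercise time $t$, it gives, for each $t\in\{0,\dots,T\}$, $\pi(\Phi)+(\tilde H^*(t)\cdot S)_T+h^*g\ge\Phi_t$ outside a $\mathcal{P}$-polar set $N_t\in\mathcal{F}_T$. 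After embedding $(\Omega,\mathbb{F})$ into $(\Omega',\mathbb{F}')$ the objects $S$, $g$, $\Phi$ and the (adapted) strategy $\tilde H^*$ are unchanged, and \deref{d1}(3) forces $Q'(N_t)=0$ for every $t$. Working off $\bigcup_t N_t$ and decomposing on the events $\{\tau'=t\}$ — on which $\tilde H^*(\tau')$ coincides with $\tilde H^*(t)$ — we obtain the pathwise inequality
$$\pi(\Phi)+\bigl(\tilde H^*(\tau')\cdot S\bigr)_T+h^*g\ \ge\ \Phi_{\tau'}\qquad Q'\text{-a.s.}$$
Since $\tilde H^*(\tau')$ is $\mathbb{F}'$-predictable, $S$ is a $(Q',\mathbb{F}')$-martingale, and $E_{Q'}[g]=0$ by \deref{d1}(2), taking $Q'$-expectations gives $E_{Q'}[\Phi_{\tau'}]\le\pi(\Phi)$. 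Taking the supremum over $\tau'$ and then over $M\in\mathbb{M}_n$ finishes the proof.

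The one genuinely delicate step is passing to $Q'$-expectations in the last display, i.e.\ verifying $E_{Q'}\bigl[(\tilde H^*(\tau')\cdot S)_T\bigr]\le 0$. This requires $\tilde H^*(\tau')$ to be admissible on the enlarged filtered space and requires $\max_{0\le t\le T}|\Phi_t|$ and $||g||$ to be $Q'$-integrable; the latter I would get from \asref{a1}(2) together with \deref{d1}(3) via the superhedging duality (e.g.\ in \cite{Nutz2}): each of these nonnegative payoffs is bounded $\mathcal{P}$-q.s., hence $Q'$-a.s.\ by \deref{d1}(3), by its finite $\mathcal{M}$-supremum plus a stochastic integral in $S$, which is a $Q'$-supermartingale since it is bounded below by a constant, so its $Q'$-mean is controlled by that supremum. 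With this in hand, $(\tilde H^*(\tau')\cdot S)_T\ge\Phi_{\tau'}-\pi(\Phi)-h^*g\ge-\max_{0\le t\le T}|\Phi_t|-|\pi(\Phi)|-||h^*||\,||g||$ is $Q'$-integrable from below, so the gains process is a $Q'$-supermartingale null at time $0$ and the estimate follows. By contrast the bound $\pi(\Phi)\le\sup_{M\in\mathbb{M}_r}\phi^M(\Phi)$ is essentially bookkeeping, the only idea being that the randomization is revealed to Nature already at time $0$.
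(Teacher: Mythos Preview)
Your argument is correct and follows exactly the paper's route: the paper proves the chain $\pi(\Phi)\ge\sup_{M\in\mathbb{M}_n}\phi^M(\Phi)\ge\sup_{M\in\mathbb{M}_r}\phi^M(\Phi)$ (declared ``easy to show'') and then invokes \thref{t1} for $\pi(\Phi)=\sup_{M\in\mathbb{M}_r}\phi^M(\Phi)$, which is the same sandwich you set up. You simply supply the details the paper omits --- in particular the passage from \eqref{e2} to randomized models via the $\mathcal{F}_0'$-measurability of the index $i$, and the integrability needed to make $E_{Q'}[(\tilde H^*(\tau')\cdot S)_T]\le 0$ rigorous --- so there is nothing to add.
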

\begin{proof}
It is easy to show that
$$\pi(\Phi)\geq\sup_{M\in\mathbb{M}_n}\phi^M(\Phi)\geq\sup_{M\in\mathbb{M}_r}\phi^M(\Phi),$$
and by \thref{t1} we have that
$$\pi(\Phi)=\sup_{M\in\mathbb{M}_r}\phi^M(\Phi).$$
\end{proof}


\section{Proof of \thref{t1}}
Without loss of generality, we assume that any of these European options cannot be replicated by stock $S$ and other European options. That is, for any $(H,h)\in\mathcal{H}\times\R^e$,
\begin{equation}\label{e4}
\text{if}\ \ (H\cdot S)_T+hg=0\ \ \mathcal{P}-\text{q.s.},\quad\quad\text{then}\ \ h=0.
\end{equation}
For otherwise, we can work on the new market where the redundant European options are removed, and obviously the super-hedging price $\pi$ would be the same. Moreover, the set of the sequences $(c_i,Q_i)_i$ in the supremum in \eqref{e2} would be unchanged.

We will prove that
$$\inf_{h\in\mathbb{R}^e}\sup_{Q\in\mathcal{M}}\sup_{\tau\in\mathcal{T}}E_Q[\Phi_\tau-hg]=\sup_{(c_i,Q_i)}\sum_i c_i\sup_{\tau\in\mathcal{T}}E_{Q_i}[\Phi_\tau].$$
To this end, denote $\mathcal{M}=(Q_\alpha)_{\alpha\in I}$. Let
$$A:=\{(c_\alpha)_{\alpha\in I}:\ c_\alpha\in\R,\ \text{only finitely many }c_\alpha\neq 0\}.$$
Let $d:A\times A\mapsto\R$,
$$d(c,c'):=\sum_{\alpha\in I}|c_\alpha-c_\alpha'|,\quad c=(c_\alpha)_{\alpha\in I},c'=(c_\alpha')_{\alpha\in I}\in A.$$
Then it is easy to see that $d$ defines a metric, and thus $(A,d)$ is Hausdorff topological vector space. Let
$$X:=\left\{(c_\alpha)_{\alpha\in I}\in A:\ c_\alpha\geq 0,\ \sum_\alpha c_\alpha=1\right\}.$$

Obviously, we have that
\begin{equation}\label{e3}
\inf_{h\in\mathbb{R}^e}\sup_{Q\in\mathcal{M}}\sup_{\tau\in\mathcal{T}}E_Q[\Phi_\tau-hg]=\inf_{h\in\mathbb{R}^e}\sup_{c\in X}\left[\sum_\alpha c_\alpha\sup_{\tau\in\mathcal{T}}E_{Q_\alpha}[\Phi_\tau-hg]\right].
\end{equation}
Now, if the inf and sup at the right hand side of \eqref{e3} can be exchanged without changing the value, then we have that
$$\inf_{h\in\mathbb{R}^e}\sup_{Q\in\mathcal{M}}\sup_{\tau\in\mathcal{T}}E_Q[\Phi_\tau-hg]=\sup_{c\in X}\inf_{h\in\mathbb{R}^e}\left[\sum_\alpha c_\alpha\sup_{\tau\in\mathcal{T}}E_{Q_\alpha}[\Phi_\tau-hg]\right]=\sup_{(c_i,Q_i)}\sum_i c_i\sup_{\tau\in\mathcal{T}}E_{Q_i}[\Phi_\tau],$$
where for the second equality, we use the fact that if $\sum_\alpha E_{Q_\alpha}[g]\neq 0$, then we can push the value inside the sup\,inf to $-\infty$ by choosing $h$ properly.

In the rest of the proof, we will show that the sup and inf at the right hand side of \eqref{e3} can be exchanged. Since we do not have the compactness of the underlying sets $R^e$ and $X$, we will apply minimax theorem \cite[Theorem 2]{{MR638178}} (we provide it the appendix as \thref{t2}.).

Let $f:X\times\R^e\mapsto\R$,
$$f(c,h):=\sum_\alpha c_\alpha\sup_{\tau\in\mathcal{T}}E_{Q_\alpha}[\Phi_\tau-hg].$$
It is easy to see that for $c\in X$ and $h\in\R^e$, the maps $c\mapsto f(c,h)$ and $h\mapsto f(c,h)$ are linear. For $c,c'\in X$ and $h\in\R^e$,
$$f(c,h)-f(c',h)\leq\sum_\alpha|c_\alpha-c_\alpha'|\left|\sup_{\tau\in\mathcal{T}}E_{Q_\alpha}[\Phi_\tau]-h E_{Q_\alpha}[g]\right|\leq C_h\sum_\alpha|c_\alpha-c_\alpha'|,$$
where
$$C_h:=\sup_{Q\in\mathcal{M}}E_Q\left[\max_{0\leq t\leq T}|\Phi_t|\right]+||h||\sup_{Q\in\mathcal{M}}E_Q||g||<\infty$$
by \asref{a1}(2). Therefore, the map $c\mapsto f(c,h)$ is continuous. Similarly, for any $h,h'\in\R^e$ and $c\in X$,
$$|f(c,h)-f(c,h')|\leq\sum_\alpha c_\alpha E_{Q_\alpha}|(h-h')g|\leq ||h-h'||\sup_{Q\in\mathcal{M}}E_Q||g||,$$
which implies that the map $h\mapsto f(c,h)$ is continuous.

We claim that $0\in\R^e$ is an interior point of the convex set $\{E_Q[g]:\ Q\in\mathcal{M}\}$. If not, then there would exist a non-zero vector $h^*\in\R^e$, such that $h^* E_Q[g]\leq 0$ for any $Q\in\mathcal{M}$. Then by \cite[superhedging theorem]{Nutz2}, the super-hedging price of $h^*g$ using only stock $S$ would be no greater than $0$. Moreover, there would exist $H\in\mathcal{H}$ such that
$$(H\cdot S)_T\geq h^*g,\quad\cP\text{-q.s.}.$$
Then NA$(\cP)$ would imply that
$$(H\cdot S)_T-h^*g=0,\quad\cP\text{-q.s.}.$$
This contradicts \eqref{e4} since $h^*\neq 0$.

Let $\bar B(r)$ be a closed ball in $\R^e$ with radius $r$ centered at the origin, where $r>0$ is chosen such that $\bar B(r)\subset\{E_Q[g]:\ Q\in\mathcal{M}\}$. Denote $L:=1+\sup_{Q\in\mathcal{M}}E_Q\left[\max_{0\leq t\leq T}\Phi_t\right]<\infty$. Let $J:=\bar B(3L/r)$. Since for any $h\in\partial J$, there exist $Q^h\in\mathcal{M}$ such that $||E_{Q^h}[g]||=r$ and $h E_{Q^h}[g]=-3L$. 
Then for $h\in\partial J$, there exists an open ball $B(h,\eps^h)$ with radius $\eps^h>0$ centered at $h$, such that
$$h' E_{Q^h}[g]\leq -2L,\quad\forall h'\in B(h,\eps^h).$$
Since $\partial J\subset\cup_{h\in\partial J} B(h,\eps^h)$, there exists a finite set $(h_i)_{i=1,\dotso,n}\subset\partial J$ such that $\partial J\subset\cup_{i=1}^n B(h_i,\eps^{h_i})$. For $i=1,\dotso,n$, let $\alpha_i$ be the index of $Q^{h_i}$ in $\mathcal{M}=\{(Q_\alpha)_\alpha:\ \alpha\in I\}$. As a result the set
$$K:=\{(c_\alpha)_\alpha\in X:\ c_{\alpha}=0\text{ if }\alpha\notin \{\alpha_i :  i=1,\dotso,n\} \}$$
is compact. For any $h\in J^c:=\R^e\setminus J$ there exists some $i\in\{1,\dotso,n\}$ such that $\frac{3L/r}{||h||}h\in B(h_i,\eps^{h_i})$, which implies
$$h E_{Q^{h_i}}[g]\leq -2L\frac{||h||}{3L/r}\leq -2L.$$
Therefore,
$$\inf_{h\in J^c}\sup_{c\in K}f(c,h)\geq\inf_{h\in J^c}\sup_{c\in K}\sum_\alpha c_\alpha E_{Q_\alpha}[-hg]-L\geq L\geq\sup_{c\in X}f(c,0)\geq\inf_{h\in\mathbb{R}^e}\sup_{c\in X}f(c,h).$$
Applying \thref{t2}, we have that
$$\inf_{h\in\mathbb{R}^e}\sup_{c\in X}f(c,h)=\sup_{c\in X}\inf_{h\in\mathbb{R}^e}f(c,h),$$
and this completes the proof.
\hfill $\square$

\begin{appendix}
\section{A non-compact minimax theorem}
Below is a minimax theorem without compactness of the underlying sets from \cite{MR638178}.
\begin{theorem}\label{t2}
Let $X, Y$ be nonempty convex sets, each in a Hausdorff topological vector space, and let $f$ be a real-valued function defined on $X\times Y$ such that

(a) For each $x\in X$, $f(x,y)$ is lower semi-continuous and quasi-convex on $Y$;

(b) For each $y\in Y$, $f(x,y)$ is upper semi-continuous and quasi-concave on $X$.\\
If there exists a nonempty compact convex set $K$ in $X$ and a compact set $H$ in $Y$ such that
$$\inf_{y\in Y}\sup_{x\in X}f(x,y)\leq\inf_{y\notin H}\max_{x\in K}f(x,y),$$
then
$$\inf_{y\in Y}\sup_{x\in X}f(x,y)=\sup_{x\in X}\inf_{y\in Y}f(x,y).$$
\end{theorem}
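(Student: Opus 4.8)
\emph{Proof proposal for \thref{t2}.}
The plan is to reduce the statement to the classical Sion minimax theorem for the case in which \emph{one} of the two sets is compact, using $K$, $H$ and the coercivity inequality to manufacture, for each threshold below the minimax value, a single compact convex subset of $X$ on which the restricted problem already has no gap.

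First I would set $v:=\inf_{y\in Y}\sup_{x\in X}f(x,y)$ and dispose of trivialities: the inequality $\sup_{x\in X}\inf_{y\in Y}f(x,y)\le v$ always holds, so only the reverse inequality is at stake, and if $v=-\infty$ there is nothing to prove. Fix a real number $\lambda<v$ (any real $\lambda$ if $v=+\infty$). The goal is to exhibit a compact convex set $K_F\subset X$ with $\inf_{y\in Y}\sup_{x\in K_F}f(x,y)\ge\lambda$; the one-sided-compact minimax theorem applied to $K_F\times Y$ will then give $\sup_{x\in X}\inf_{y\in Y}f(x,y)\ge\sup_{x\in K_F}\inf_{y\in Y}f(x,y)=\inf_{y\in Y}\sup_{x\in K_F}f(x,y)\ge\lambda$, and letting $\lambda\uparrow v$ finishes the proof.

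To build $K_F$: for each $y\in H$ we have $\sup_{x\in X}f(x,y)\ge v>\lambda$, so choose $x_y\in X$ with $f(x_y,y)>\lambda$; by lower semicontinuity of $f(x_y,\cdot)$ (hypothesis (a)) the set $U_y:=\{y'\in Y:f(x_y,y')>\lambda\}$ is open in $Y$ and contains $y$. Compactness of $H$ yields a finite subcover $U_{y_1},\dots,U_{y_n}$; put $F:=\{x_{y_1},\dots,x_{y_n}\}$ and $K_F:=\mathrm{co}(K\cup F)$. This $K_F$ is convex, and it is compact because it is the image of the compact set $\Delta_n\times K$ (with $\Delta_n$ the standard $n$-simplex) under the continuous map $(\mu,\lambda_1,\dots,\lambda_n;k)\mapsto \mu k+\sum_{i=1}^n\lambda_i x_{y_i}$, using that $K$ is convex and nonempty. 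Now, for $y\in H$ there is $j$ with $y\in U_{y_j}$, hence $\sup_{x\in K_F}f(x,y)\ge f(x_{y_j},y)>\lambda$; and for $y\in Y\setminus H$ the coercivity hypothesis gives $\sup_{x\in K_F}f(x,y)\ge\max_{x\in K}f(x,y)\ge v>\lambda$ (the maximum over $K$ exists since $K$ is compact and $f(\cdot,y)$ is upper semicontinuous). Therefore $\inf_{y\in Y}\sup_{x\in K_F}f(x,y)\ge\lambda$, as required.

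It remains to invoke Sion's minimax theorem on $K_F\times Y$: $K_F$ is compact and convex, $Y$ is convex, $f(\cdot,y)$ is upper semicontinuous and quasi-concave on $K_F\subset X$ (hypothesis (b)) and $f(x,\cdot)$ is lower semicontinuous and quasi-convex on $Y$ (hypothesis (a)). If one prefers to avoid quoting Sion, the needed equality on $K_F\times Y$ can be obtained from the KKM/finite-intersection argument applied to the closed convex sets $\{x\in K_F:f(x,y)\ge\lambda\}$, $y\in Y$. The step I expect to be the crux is the localization: upgrading a finite subcover of $H$, which only controls $f$ on $H$, into the bound $\inf_{y\in Y}\sup_{x\in K_F}f(x,y)\ge\lambda$ valid on \emph{all} of $Y$ — this is exactly where the hypothesis $\inf_{y\in Y}\sup_{x\in X}f\le\inf_{y\notin H}\max_{x\in K}f$ is used; the accompanying technical point is checking that enlarging $K$ to the compact convex set $K_F$ is legitimate, so that the one-sided-compact minimax theorem applies.
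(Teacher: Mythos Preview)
The paper does not actually prove \thref{t2}: it is quoted verbatim in the appendix as \cite[Theorem~2]{MR638178} and used as a black box in the proof of \thref{t1}. So there is no ``paper's own proof'' to compare against.

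Your argument is nonetheless a correct proof of the stated theorem. The reduction to Sion's one-sided-compact minimax theorem via the auxiliary set $K_F=\mathrm{co}(K\cup\{x_{y_1},\dots,x_{y_n}\})$ is sound: $K_F$ is indeed compact (continuous image of the compact set $\Delta_n\times K$, using that addition and scalar multiplication are jointly continuous in a topological vector space and that $K$ is convex so all convex combinations of points of $K$ collapse to a single $k\in K$); the bound $\inf_{y\in Y}\sup_{x\in K_F}f(x,y)\ge\lambda$ is established separately on $H$ (via the finite cover) and on $Y\setminus H$ (via the coercivity hypothesis and $K\subset K_F$); and the hypotheses (a),(b) restrict to $K_F\times Y$ so Sion applies. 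One small remark: in the edge case $v=+\infty$, real-valuedness of $f$ together with compactness of $K$ and upper semicontinuity forces $\max_{x\in K}f(x,y)<\infty$, so the coercivity hypothesis then implies $Y\setminus H=\emptyset$ and the argument goes through with $H=Y$; you implicitly handle this, but it is worth noting.
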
 
\end{appendix}

\bibliographystyle{agsm}
\bibliography{ref}

\end{document}